\newtheoremstyle{tight}
{0pt}   
{0pt}   
{}      
{}      
{\bfseries} 
{.}     
{0.5em} 
{}      
\theoremstyle{tight}
\newtheorem{theorem}{Theorem}[section]
\newtheorem{example}{Example}[section]
\newtheorem{definition}{Definition}[section]
\newtheorem{corollary}{Corollary}[section]
\newtheorem{property}{Property}[section]
\title[Cumulative Residual Mathai--Haubold Entropy]{The Cumulative Residual Mathai--Haubold Entropy and its Non-parametric Inference}
\author[]{A\lowercase{nija} C.R.\lowercase{\textsuperscript{a}} , S\lowercase{mitha} S.\lowercase{\textsuperscript{a}} \lowercase{and} S\lowercase{udheesh} K. K\lowercase{attumannil.\textsuperscript{b}}
\\
\lowercase{\textsuperscript{a}}K\lowercase{uriakose} E\lowercase{lias} C\lowercase{ollege},
  M\lowercase{annanam},
  K\lowercase{erala},
  I\lowercase{ndia,}\\
\lowercase{\textsuperscript{b}}I\lowercase{ndian} S\lowercase{tatistical} I\lowercase{nstitute},
  C\lowercase{hennai}, I\lowercase{ndia.}}
\begin{document}
\maketitle
\doublespacing
\vspace{-0.2in}
\begin{abstract}
   We introduce the cumulative residual Mathai--Haubold entropy (CRMHE) and investigate its properties. We then propose a dynamic counterpart, the dynamic cumulative residual Mathai--Haubold entropy (DCRMHE), and establish its uniqueness in characterizing the distribution function. Non-parametric estimators for the CRMHE and DCRMHE are developed based on the kernel density estimation of the survival function. The efficacy of the estimators is assessed through a comprehensive Monte Carlo simulation study. The relevance of the proposed DCRMHE estimator is illustrated using two real-world datasets: on the failure times of 70 aircraft windshields and failure times of 40 randomly selected mechanical switches.\\
 Keywords: Mathai--Haubold entropy; Kernel estimation;   Monte Carlo simulation.
\end{abstract}
\section{Introduction}
Measuring the uncertainty of a random variable is a fundamental problem in information theory. Shannon (1948) introduced the concept of entropy, laying the foundation for information theory. It measures the average uncertainty or information content associated with the outcomes of a random variable.

 Let $X$ be a non-negative random variable having probability density function $f(x)$, then the Shannon entropy is defined as
 \begin{align}\label{entropy}
 H(X)=-\int_{0}^{\infty} f(x) \log(f(x)) \, dx  = E[-\log f(X)],
\end{align}
 where “log” indicates the natural logarithm. It has been widely used in various fields, such as communication theory, computer science and the physical and biological sciences.
\par By introducing additional parameters, various generalizations of Shannon entropy have been proposed in the literature, making these measures more sensitive to different shapes of probability distributions.
 Mathai and Haubold (2006) introduced a generalized information measure, namely Mathai--Haubold entropy (MHE) as
 \begin{align}\label{MHE}
M_\alpha(X)= \frac{1}{\alpha-1}\left( {\int_0^\infty} f^{2-\alpha}(x) \,dx -1\right), \alpha \neq 1, 0<\alpha<2.
 \end{align}
The generalizing parameter $\alpha$ provides greater flexibility in quantifying uncertainty, making the MHE a more general measure than Shannon entropy.
 Further, when $\alpha \to 1$, $M_\alpha(X)$  reduces to the Shannon entropy given in (\ref{entropy}).

 \par To measure the uncertainty associated with the remaining lifetime of a unit, the concept of residual entropy was introduced by Ebrahimi (1996) and is defined as
\begin{align}\label{residual entropy}
H(X;t) =-\int_t^\infty \frac{f(x)}{\bar{F}(t)}\log \left( \frac{f(x)}{\bar{F}(t)}\right) \,dx,
\end{align}
where $\bar{F}(t)=1-F(t)$ is the survival function.
This measure is used for modeling and analysis of lifetime data.
\par In a similar manner, Dar and Al--Zahrani (2013) extended the Mathai--Haubold entropy measure to account for the current age of the system, into consideration as Mathai--Haubold residual entropy, which is given as
 \begin{align}
 M_\alpha(X;t)= \frac{1}{\alpha-1}\left( {\int_t^\infty} \left(\frac{{f}(x)}{\bar{F}(t)}\right)^{2-\alpha} \,dx -1\right), \alpha \neq 1, 0<\alpha<2,t>0.
 \end{align}
\par A number of information measures initially formulated via probability density functions have subsequently been reconstructed using the survival function. This methodological shift is driven by the survival function's superior utility and interpretability in reliability and survival analysis, particularly when handling censored or truncated data. Within this framework, Rao et al. (2004) introduced a seminal alternative entropy measure known as cumulative residual entropy (CRE), which is given by
\begin{align}\label{CRE}
     CRE(X)=-\int_{0}^{\infty} \bar{F}(x) \log \bar{F}(x) \,dx.
\end{align}

  The CRE is always non-negative and has practical applications in fields such as reliability analysis and image alignment.
 \par Asadi and Zohrevand (2007) proposed a modification of the cumulative residual entropy (CRE), referred to as the dynamic cumulative residual entropy and is defined as
 \begin{align}\label{CREt}
 CRE(X;t)=-\int_{t}^{\infty} \frac{\bar{F}(x)} {\bar{F}(t)} \log \left( \frac{\bar{F}(x)}{\bar{F}(t)} \right) \,dx.
 \end{align}
 \par Rajesh and Sunoj (2016) proposed an alternative form of cumulative Tsallis entropy of order $\alpha$ as an extension of Tsallis entropy (Tsallis, 1988) and they also defined its dynamic version as
 \begin{align}\label{zeta}
 \zeta_{\alpha}(t)=\zeta_{\alpha}(X;t)=\frac{1}{\alpha-1}\left(r(t)-\int_{t}^{\infty}\left(\frac{\bar{F}(x)}{\bar{F}(t)}\right)\,dx\right),
 \end{align}
 where $r(t)$ denotes the mean residual life function of the random variable $X$ and $\zeta_{\alpha}(t)$ measures the uncertainty for the residual random variable  in terms of the survival function for different $\alpha$.
 \par Sudheesh et al. (2022) introduced a generalized measure of cumulative residual entropy and studied its properties. They also showed that several existing measures of entropy are special cases of this generalized cumulative entropy.
  \par Based on the entropy generating function introduced by Golomb (1966), Smitha et al. (2023) proposed an alternative definition known as the cumulative residual entropy generating function (CREGF), which is derived from the survival function. The authors also extended this concept to create a dynamic counterpart, termed the dynamic cumulative residual entropy generating function (DCREGF)
  \begin{align}\label{C(s)}
  C_s{(X;t)}=\int_t^\infty\left(\frac{\bar{F}(x)}{\bar{F}(t)}\right)^s \,dx, s\geq0,s\neq1.
 \end{align}


Sudheesh et al. (2023) established some relationships between the information measures and the Gini mean difference(GMD). They also presented relationships between the dynamic versions of the cumulative residual/past extropy measures and the truncated GMD. Recently, Lu and Xie (2025)  introduced the concept of the weighted cumulative residual information generating function (WCRIGF) and investigate its key properties, including various bounds, its relationships with other information measures, and its behavior under monotonic transformations. Additionally, they introduced the relative weighted cumulative residual information generating function (RWCRIGF) and systematically examined its properties. Pandey  and Kundu (2025) generalized the concept of residual extropy (dual concept of entropy) to the bivariate case by introducing a new information-theoretic measure that captures the uncertainty of a paired lifetime beyond given thresholds. Balakrishnan et al. (2024) proposed dispersion indices based on the Kerridge's inaccuracy measure and the Kullback–Leibler divergence measure. Building on this idea, Sankaran et al. (2025) introduced a quantile-based dispersion index derived from the Kullback–Leibler divergence, offering a useful alternative for examining variability in information measures.

\par Motivated by the superior stability of cumulative distribution based information measures over their density based counterparts, this work introduces an extension of the Mathai--Haubold entropy formulated using the survival function. As a result of this stability, the proposed measure is well-suited for applications in reliability assessment, maintenance scheduling, and the modeling of lifetime data.


The remaining sections of the paper are organized as follows. Section 2 introduces the cumulative residual Mathai--Haubold entropy (CRMHE) and explores its properties. Section 3 presents the dynamic cumulative residual Mathai--Haubold entropy (DCRMHE), discusses its properties, and demonstrates that it uniquely characterizes the distribution. In Section 4, the characterization results are examined by establishing connections between the hazard rate, mean residual life function, and DCRMHE. Section 5 introduces new classes of lifetime distributions and, based on DCRMHE, proposes a hazard rate ordering. Section 6 focuses on the non-parametric kernel estimation methods for CRMHE and DCRMHE while Section 7 evaluates the performance of these estimators through Monte Carlo simulation studies. In Section 8, two real-data applications involving the failure times of aircraft windshields and randomly selected mechanical switches are presented. Finally, Section 9 offers a summary and perspectives for future research.
\section{Cumulative residual  Mathai--Haubold  entropy (CRMHE)}

 In this section, we introduce the cumulative residual  Mathai--Haubold  entropy as a survival based extension of  Mathai--Haubold  entropy and then examine some of its properties.
 \begin{definition}
 For a continuous non-negative random variable $X$ with survival function $\bar{F}(x)$, CRMHE denoted by $CRM_\alpha(X)$ is defined as
 \begin{align}\label{CRM}
CRM_\alpha(X)= \frac{1}{\alpha-1}\left[ {\int_0^\infty} (\bar{F}(x))^{2-\alpha} \,dx -1 \right], \alpha \neq 1, 0<\alpha<2.
 \end{align}
\end{definition}
 As $\alpha \to 1$, $CRM_\alpha(X)$ reduces to the cumulative residual entropy (CRE) given in (\ref{CRE}).

\noindent Next, we establish some properties of $CRM_\alpha(X)$. The following property shows that the measure remains invariant under relocation.    \begin{property}
Let the random variable $Y=aX+b$, where $a>0$ and $b\geq0$. Then, the cumulative residual  Mathai--Haubold  entropy is
 \begin{align}\label{CRM Y}
CRM_\alpha(Y)=aCRM_\alpha(X)+\frac{a-1}{\alpha-1}.
\end{align}
\end{property}
\begin{proof}
 We have
 \begin{align}\label{cr}
CRM_\alpha(Y)=  \frac{1}{\alpha-1}\left( {\int_0^\infty} (\bar{F}_Y{(y)})^{2-\alpha} \,dy -1\right).
 \end{align}
 Here $Y=aX+b$, 
 so
\begin{align*}
CRM_\alpha(Y)=\frac{1}{\alpha-1}\left(aCRM_\alpha(X)(\alpha-1)+a-1\right).
 \end{align*}
On simplification, we get the required result.
 \end{proof}
\begin{corollary}
From the above property, it is clear that, if $a=1$
\[
CRM_\alpha(Y)=CRM_\alpha(X).
\]
\end{corollary}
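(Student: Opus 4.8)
The plan is to read the corollary off immediately from the relocation identity \eqref{CRM Y}, namely $CRM_\alpha(Y)=aCRM_\alpha(X)+\frac{a-1}{\alpha-1}$ for $Y=aX+b$ with $a>0$ and $b\ge 0$; the corollary is nothing more than the case $a=1$ of that identity, so the whole argument consists of making this single substitution.

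Carrying it out, with $a=1$ the multiplicative factor in front of $CRM_\alpha(X)$ equals $1$, and the additive term collapses to $\frac{1-1}{\alpha-1}=0$ --- a legitimate expression because the standing assumption $\alpha\neq 1$ keeps the denominator nonzero --- which leaves $CRM_\alpha(Y)=CRM_\alpha(X)$, exactly as asserted. I expect no genuine obstacle here: the only thing worth checking is that $a=1$ is an admissible value in \eqref{CRM Y}, and it is, since the hypothesis there restricts $a$ only by $a>0$. I would therefore present the proof in this one-step form rather than re-deriving anything from the definition \eqref{CRM}, as all of the analytic content already resides in \eqref{CRM Y}.
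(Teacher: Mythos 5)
Your proof is correct and matches the paper's intent exactly: the corollary is stated as an immediate consequence of Property 2.1, obtained by setting $a=1$ so that the factor becomes $1$ and the additive term $\frac{a-1}{\alpha-1}$ vanishes (with $\alpha\neq 1$ ensuring the expression is well defined). Nothing further is needed.
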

The next property gives a bound for the cumulative residual  Mathai--Haubold  entropy (CRMHE) with reference to the mean as follows.
\begin{property}
Consider the non-negative random variable $X$ with mean $\mu$, then  \\
$(i)$ $CRM_\alpha(X)>\frac{\mu-1}{\alpha-1}$, for $0<\alpha<1$.\\
$(ii)$ $CRM_\alpha(X)<\frac{\mu-1}{\alpha-1}$, for $1<\alpha<2$.
\end{property}
\begin{proof}
  For $0<\alpha<1$, we have
  \[
 (\bar{F}(x))^{2-\alpha} < \bar{F}(x).
  \]
  Integrate with respect to $x$, we get
  \[
  {\int_0^\infty}((\bar{F}(x))^{2-\alpha}\,dx<{\int_0^\infty}\bar{F}(x)\,dx.
  \]
From the above inequality, we can observe that
 \[
  CRM_\alpha(X) 
  >\frac{\mu-1}{\alpha-1},
  \]
  where
  \[
 \mu={\int_0^\infty} \bar{F}(x)\,dx.
 \]
 This completes $(i)$. Similarly for $(ii)$.
  \end{proof}
 Under the proportional hazards (PH) model assumption, the survival functions of the random variables $X$ and $X_{\theta}^*$ satisfy the following relationship
  $$\bar{F}_\theta^*(x)=(\bar{F}(x))^\theta, \theta>0,x \in R.$$
The following property establishes a relationship between the CRMHE of the random variables $X$ and $X_{\theta}^*$.
\begin{property}
$CRM_\alpha(X_\theta^*)=\frac{\beta-1}{\alpha-1}CRM_\beta(X)$,
where $\beta=2-\theta(2-\alpha)$.
\end{property}
\begin{example}
Consider a non-negative continuous random variable $X$ with distribution function $F$ and let $X_{1:n}$ be the first order statistic based on the random sample $X_1,X_2,...,X_n$ from $F$. So we have $\bar{F}_{X_{1:n}}(x)=(\bar{F}(x))^n$. Observe that $X_i,i=1,2,...,n$ and  $X_{1:n}$ satisfies the condition for being a proportional hazards model. Further in view of cumulative residual  Mathai--Haubold entropy
\[
CRM_\alpha(X_{1:n})=\frac{\beta-1}{\alpha-1}CRM_\beta(X), \text{ where }  \beta=2-n(2-\alpha),
\]
we get the  relation between the CRMHE of two
random variables as claimed in property 2.3.
\end{example}
\par The implication of the above result is that when a system of components with life distribution $F(x)$ are in series, the cumulative residual  Mathai--Haubold  entropy of minimum depends on the number of components and the cumulative residual  Mathai--Haubold  entropy of the original distribution $F(x).$
\begin{example}
If
$X$ follows exponential distribution with parameter $\lambda$ then
\[
CRM_\alpha(X_\theta^*)
=\frac{\beta-1}{\alpha-1}CRM_\beta(X);\beta=2-\theta(2-\alpha).
\]
\end{example}
 \noindent Table 1 provides the expressions of $CRM_\alpha(X)$ for some well-known distributions.
\begin{table}
 \centering
 \caption{Expression of CRMHE for some distributions.}
 \label{tab: CRMHE_distributions}
\begin{tabular}{|c|c|c|c|}

\hline
&Distribution&$\bar{F}(x)$&$CRM_\alpha(X)$ \\

\hline
$(i)$&$U(0,a)$&$1-\frac{x}{a};0<x<a, a>0$&${\frac{1}{\alpha-1}}\left(\frac{a}{3-\alpha}-1 \right)$\\

\hline
$(ii)$&exp$(\lambda)$&$e^{-\lambda x};x\geq0, \lambda>0$&$\frac{1}{\alpha-1}\left(\frac{1}{\lambda(2-\alpha)}-1 \right)$\\
\hline
$(iii)$&Pareto$(k;a)$&${\left(\frac{k}{x}\right)}^a ;x\geq k, k>0, a>0$&$\frac{1}{\alpha-1}\left(\frac{k}{a(2-\alpha)-1}-1 \right)$\\
\hline
$(iv)$&Pareto $II$&${\left( 1+\frac{x}{a}\right)}^{-b};x\geq0,a>0,b>0$&$\frac{1}{\alpha-1}\left(\frac{a}{b(2-\alpha)-1}-1 \right)$\\
\hline
$(v)$&GPD&${\left( 1+\frac{ax}{b}\right)}^{-\left( 1+\frac{1}{a}\right)};x>0,a>-1,b>0$&$\frac{1}{\alpha-1}\left(\frac{b}{(2-\alpha)(1+a)-a}-1 \right)$\\
\hline
\end{tabular}
\end{table}
\section{Dynamic cumulative residual Mathai--Haubold entropy(DCRMHE)}
The study of duration is of interest in many branches of science, including reliability, survival analysis, actuarial science, economics, business, and several other fields. Therefore, in this section, we introduce the dynamic cumulative residual  Mathai--Haubold entropy and examine its properties.
\begin{definition}
Consider the lifetime of a component or system as $X$, and let it survive up to time $t$. In such cases, we consider the random variable $X_t=X-t|X>t$, which is time-dependent or dynamic with the survival function given by
\[
\bar{F}_t(x)=
\begin{cases}
    \frac{\bar{F}(x)}{\bar{F}(t)}, & \text{if $x>t$},\\
    1, & \text{$otherwise$}.
\end{cases}
\]
\end{definition}
\begin{definition}
 For a non-negative continuous random variable $X$ with survival function $\bar{F}(x)$, DCRMHE denoted by $CRM_\alpha(X;t)$ is defined as
\begin{align}\label{CRM t}
    CRM_\alpha(X;t)= \frac{1}{\alpha-1}\left( {\int_t^\infty} \left(\frac{\bar{F}(x)}{\bar{F}(t)}\right)^{2-\alpha} \,dx -1\right), \alpha \neq 1, 0<\alpha<2.
\end{align}
\end{definition}


\noindent Differentiating (\ref{CRM t}) with respect to $t$, we get the following relationship with hazard rate $h(t)$
\begin{align}\label{CRM't}
(\alpha-1)CRM'_\alpha(X;t)=(2-\alpha)h(t)\left[(\alpha-1)CRM_\alpha(X;t)+1\right]-1.
\end{align}
From the definition of DCRMHE, we can observe the following properties.
\begin{property}
For $t=0$, then $CRM_\alpha(X;t)=CRM_\alpha(X)$.
\end{property}
\begin{property}
 As $\alpha \to 1$,  $CRM_\alpha(X;t)$ becomes the dynamic cumulative residual entropy function given in (\ref{CREt}).
 \end{property}
 \begin{property}
 Consider the random variable $Y=aX+b$, where $a>0$ and $b\geq0$. Then we have
\[ CRM_\alpha(Y;t) = a\, CRM_\alpha\!\left(X;\frac{t-b}{a}\right)+\frac{a-1}{\alpha-1}; \quad t\geq b.
 \]
 \end{property}
 \begin{proof}
    We have 
 \[
CRM_\alpha(Y; t) = \frac{1}{\alpha-1} \left( \int_t^\infty \left( \frac{\bar{F}_Y(y)}{\bar{F}(t)} \right)^{2-\alpha} \, dy - 1 \right).
\]

 When $Y=aX+b$,
\[
CRM_\alpha(Y; t) = \frac{1}{\alpha-1} \left( a \int_{\frac{t-b}{a}}^\infty \left( \frac{\bar{F}(x)}{\bar{F}(t)} \right)^{2-\alpha} \, dx - 1 \right)
\]

\[
\qquad\qquad\qquad\qquad\qquad = \frac{1}{\alpha-1} \left( a \, CRM_\alpha\left(X; \frac{t-b}{a}\right)(\alpha-1) + a - 1 \right), \quad t \geq b.
\]

 Hence the proof.
 \end{proof}
 \begin{corollary}
 From the above property, it is clear that\\
$(i)$ If $b=0$, then $CRM_\alpha(aX;t)=aCRM_\alpha\left(X;\frac{t}
{a}\right)+\frac{a-1}{\alpha-1}$.\\
$(ii)$ If $a=1$, then $CRM_\alpha(X+b;t)=CRM_\alpha(X;t-b)$.
\end{corollary}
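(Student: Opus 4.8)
The plan is to obtain both identities directly as specializations of the immediately preceding property (Property 3.7), which asserts that for $Y=aX+b$ with $a>0$, $b\geq 0$ and $t\geq b$,
\begin{align*}
CRM_\alpha(Y;t) = a\,CRM_\alpha\!\left(X;\frac{t-b}{a}\right) + \frac{a-1}{\alpha-1}.
\end{align*}
No fresh appeal to the definition of the DCRMHE is needed; everything follows from this single formula by choosing the parameters appropriately and checking that the side condition $t\geq b$ and the affine term simplify correctly.

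For part $(i)$ I would set $b=0$. The restriction $t\geq b$ then reads $t\geq 0$, which holds throughout the admissible range; the argument $(t-b)/a$ becomes $t/a$, and the constant $\frac{a-1}{\alpha-1}$ is unaffected. This yields $CRM_\alpha(aX;t)=a\,CRM_\alpha\!\left(X;\tfrac{t}{a}\right)+\frac{a-1}{\alpha-1}$, as claimed. For part $(ii)$ I would set $a=1$: the constant $\frac{a-1}{\alpha-1}$ vanishes, the multiplicative factor $a$ in front becomes $1$, the argument $(t-b)/a$ reduces to $t-b$, and $t\geq b$ is precisely the natural range of the residual variable associated with $X+b$ given survival past $t$. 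Hence $CRM_\alpha(X+b;t)=CRM_\alpha(X;t-b)$.

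There is essentially no obstacle: the only point requiring a moment's attention is to track how the domain condition $t\geq b$ from Property 3.7 specializes in each of the two cases, and to confirm the trivial simplification of the term $\frac{a-1}{\alpha-1}$ when $a=1$. Both steps are immediate.
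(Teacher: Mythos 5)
Your proposal is correct and matches the paper's intent exactly: the corollary is stated as an immediate specialization of the preceding property, obtained by setting $b=0$ for part $(i)$ and $a=1$ for part $(ii)$, which is precisely what you do (with the added care about the domain condition $t\geq b$, which the paper leaves implicit).
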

The following theorem shows that the dynamic cumulative residual  Mathai--Haubold entropy uniquely determines the distribution of $X$.

\begin{theorem}
 Consider a non-negative random variable $X$ with density function $f(x)$, survival function $\bar{F}(x)$ and hazard rate $h(x)$. Suppose that $CRM_\alpha(X;t)$ is increasing in $t$. Then $CRM_\alpha(X;t)$ uniquely determines the distribution function of $X$.
 \end{theorem}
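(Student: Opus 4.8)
\noindent\emph{Proof idea.} The plan is to invert the differential identity (\ref{CRM't}) so as to recover the hazard rate of $X$, and hence its survival function, from the single function $t\mapsto CRM_\alpha(X;t)$. Suppose $X$ and $Y$ are non-negative random variables with continuous densities $f,g$, survival functions $\bar F,\bar G$ and hazard rates $h_X,h_Y$, and suppose $CRM_\alpha(X;t)=CRM_\alpha(Y;t)=:\psi(t)$ for every $t\ge 0$. The first point to record is that
\[
1+(\alpha-1)\,CRM_\alpha(X;t)=\int_t^\infty\left(\frac{\bar F(x)}{\bar F(t)}\right)^{2-\alpha}\,dx,
\]
which is strictly positive and (because $CRM_\alpha(X;t)$ is assumed to exist) finite; since $0<\alpha<2$ forces $2-\alpha>0$, the coefficient of $h(t)$ in (\ref{CRM't}) never vanishes. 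Hence (\ref{CRM't}) can be solved uniquely for the hazard rate,
\[
h(t)=\frac{1+(\alpha-1)\,CRM'_\alpha(X;t)}{(2-\alpha)\bigl[1+(\alpha-1)\,CRM_\alpha(X;t)\bigr]},
\]
so $h(t)$ is completely determined by $CRM_\alpha(X;t)$ and its derivative.

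Applying this formula to both $X$ and $Y$ and using $CRM_\alpha(X;t)=CRM_\alpha(Y;t)$ (which also forces equality of the derivatives wherever these functions are differentiable, and, since $f,g$ are continuous, that is everywhere on the support) I would obtain $h_X(t)=h_Y(t)$ for all $t$. The reconstruction $\bar F(t)=\exp\!\bigl(-\int_0^t h_X(s)\,ds\bigr)=\exp\!\bigl(-\int_0^t h_Y(s)\,ds\bigr)=\bar G(t)$ then shows that $X$ and $Y$ share the same distribution, which is the assertion. Equivalently, without introducing a comparison variable, one may reconstruct $F$ explicitly: setting $u(t)=\int_t^\infty\bar F(x)^{2-\alpha}\,dx$, a short computation shows that $u$ satisfies the linear first-order equation $u'(t)=-u(t)/\bigl[1+(\alpha-1)\psi(t)\bigr]$ with $u(0)=1+(\alpha-1)\psi(0)$, so $u$ and then $\bar F(t)=\bigl(u(t)/[1+(\alpha-1)\psi(t)]\bigr)^{1/(2-\alpha)}$ are uniquely fixed by $\psi$.

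The routine steps are the two differentiations that (re)establish (\ref{CRM't}) and the verification that the recovered $h$ integrates back to $\bar F$ with the normalization $\bar F(0)=1$. The step that needs a little care is the positivity and finiteness of $1+(\alpha-1)CRM_\alpha(X;t)$, since this is precisely what makes the division legitimate and the inversion unconditional. I expect the increasing-in-$t$ hypothesis to play only a secondary, regularity role: it guarantees that $CRM_\alpha(X;t)$ is differentiable (automatic for a monotone function, and in any event assured here by continuity of $f$) so that (\ref{CRM't}) may be invoked pointwise; the essential mechanism is simply that (\ref{CRM't}) is affine in $h(t)$ with a leading coefficient that cannot vanish.
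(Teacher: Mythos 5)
Your proposal is correct and takes essentially the same route as the paper: it differentiates the defining identity (equivalently, inverts (\ref{CRM't})) to recover the hazard rate $h(t)$ from $CRM_\alpha(X;t)$ and its derivative, and then reconstructs $\bar F$ from $h$, exactly as the paper does when it equates $h_1(t)\bigl[(\alpha-1)CRM_\alpha(X;t)+1\bigr]=h_2(t)\bigl[(\alpha-1)CRM_\alpha(Y;t)+1\bigr]$ and cancels the common bracket. Your explicit remarks that $(2-\alpha)\bigl[1+(\alpha-1)CRM_\alpha(X;t)\bigr]>0$ (so the division is legitimate) and that the increasing-in-$t$ hypothesis is not really needed only make precise steps the paper leaves implicit.
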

 \begin{proof}
Suppose that $F(x)$ and $G(x)$ are distribution functions such that
 \begin{align}\label{CRM F=G}
CRM_\alpha(X;t)=CRM_\alpha(Y;t).
     \end{align}
This implies that
     \[
     {\int_t^\infty} \left(\frac{\bar{F}(x)}{\bar{F}(t)}\right)^{2-\alpha} \,dx={\int_t^\infty} \left(\frac{\bar{G}(x)}{\bar{G}(t)}\right)^{2-\alpha} \,dx.
     \]
     Differentiating the above equation with respect to $t$ and using the definition of hazard rates, we get
     \begin{align}\label{hazard}
h_{1}(t) \left[ (\alpha - 1) CRM_\alpha(X; t) + 1 \right]
= h_{2}(t) \left[ (\alpha - 1) CRM_\alpha(Y; t) + 1 \right].
\end{align}

     where $h_{1}(t)$ and $h_{2}(t)$ are the hazard rates corresponding to $f(x)$ and $g(x)$, respectively. To prove $\bar{F}(t)=\bar{G}(t)$, it is enough to show that $h_{1}(t)=h_{2}(t),$ for $t( \geq 0).$
     Using (\ref{CRM F=G}) in (\ref{hazard}), we get $ h_{1}(t)=h_{2}(t).$       This implies that $CRM_\alpha(X;t)$ uniquely determines the distribution of $X$.
 \end{proof}
\section{Characterization results}
In this section, we characterise some well known distributions based on the dynamic cumulative residual  Mathai--Haubold  entropy.
The following theorem shows that the dynamic cumulative residual Mathai--Haubold entropy is independent of $t$ if and only if $X$ follows an exponential distribution.
\begin{theorem}
  Let $X$ be a continuous non-negative random variable with distribution function $F(x)$. Then $CRM_\alpha(X; t)$ is independent of $t$ if and only if $X$ follows an exponential distribution.
  \end{theorem}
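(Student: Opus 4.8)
The plan is to prove both implications using the differential identity (\ref{CRM't}) that links $CRM_\alpha(X;t)$, its derivative, and the hazard rate $h(t)$, together with the classical fact that a constant hazard rate characterizes the exponential law.

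For the ``if'' direction, I would start from the assumption that $X$ is exponential with parameter $\lambda$. Using the memoryless property, $\bar F(x)/\bar F(t)=e^{-\lambda(x-t)}$ for $x>t$, so a direct substitution in (\ref{CRM t}) gives
\[
CRM_\alpha(X;t)=\frac{1}{\alpha-1}\left(\int_t^\infty e^{-\lambda(2-\alpha)(x-t)}\,dx-1\right)=\frac{1}{\alpha-1}\left(\frac{1}{\lambda(2-\alpha)}-1\right),
\]
which does not depend on $t$ (this also matches line $(ii)$ of Table~1). Hence $CRM_\alpha(X;t)$ is independent of $t$.

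For the ``only if'' direction, suppose $CRM_\alpha(X;t)=c$ for all $t\ge 0$, where $c$ is a constant. Then $CRM'_\alpha(X;t)=0$, and substituting into (\ref{CRM't}) yields
\[
0=(2-\alpha)h(t)\bigl[(\alpha-1)c+1\bigr]-1 .
\]
The key point to check is that the bracket is nonzero: since $(\alpha-1)CRM_\alpha(X;t)+1=\int_t^\infty\bigl(\bar F(x)/\bar F(t)\bigr)^{2-\alpha}\,dx>0$, we have $(\alpha-1)c+1>0$, so we may solve for $h(t)$ to obtain
\[
h(t)=\frac{1}{(2-\alpha)\bigl[(\alpha-1)c+1\bigr]},
\]
a constant independent of $t$. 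A constant hazard rate forces $\bar F(t)=e^{-\lambda t}$ for the corresponding constant $\lambda$, i.e. $X$ is exponential, completing the proof.

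I do not anticipate a serious obstacle here; the only subtlety is justifying that $(\alpha-1)c+1\neq 0$ so that the algebra is legitimate, and this follows immediately from the positivity of the defining integral. The argument is essentially a one-line consequence of (\ref{CRM't}) once that identity is in hand, so most of the write-up is bookkeeping rather than genuine difficulty.
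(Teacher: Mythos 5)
Your proposal is correct and follows essentially the same route as the paper: both directions use the differential identity (\ref{CRM't}) to deduce a constant hazard rate from constancy of $CRM_\alpha(X;t)$, and the converse is a direct computation giving $\frac{1}{\alpha-1}\left(\frac{1}{\lambda(2-\alpha)}-1\right)$. Your extra remark that $(\alpha-1)c+1>0$ (so the algebra solving for $h(t)$ is legitimate) is a small justification the paper leaves implicit, but it does not change the argument.
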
\vspace{-0.2in}
  \begin{proof}
  Assume that $CRM_\alpha(X;t)=k$, where $k$ is a positive constant.\\
  Then
  \[
   CRM'_\alpha(X;t)=0.
  \]
   Using the relationship between $CRM_\alpha(X;t)$ and the hazard rate given in (\ref{CRM't}), we obtain
   \[
   (2-\alpha)\,h(t)\left((\alpha-1)k+1\right)-1=0.
   \]
   From the above expression, we observe that $h(t)$ is constant. Since a constant hazard rate characterizes the exponential distribution, we conclude that $X$ is exponentially distributed.
  \par Conversely, assume that X has an exponential distribution with parameter $\theta$, then
  \[
  CRM_\alpha(X;t) =\frac{1}{\alpha-1}\left( \frac{1}{\theta(2-\alpha)}-1\right),   \text{a constant.}
  \]
Hence, it is clear that DCRMHE is independent of $t$ if and only if $X$ has an exponential distribution.
\end{proof}
The following theorem provides a characterization of the generalized Pareto distribution based on the functional structure of the dynamic cumulative residual Mathai–Haubold entropy.
\begin{theorem}
    Consider a non-negative continuous random variable $X$ with survival function $\bar{F}(x)$, then the dynamic cumulative residual Mathai--Haubold entropy is a linear function of $t$ if and only if $X$ follows a generalized Pareto distribution (GPD).
    \end{theorem}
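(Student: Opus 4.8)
The plan is to exploit the first–order relationship (\ref{CRM't}) between $CRM_\alpha(X;t)$, its derivative, and the hazard rate $h(t)$, in the same spirit as the exponential characterization proved above. For the \emph{necessity} direction I would assume $CRM_\alpha(X;t)=c_1 t+c_0$ for constants $c_1,c_0$ with $c_1\neq 0$ (the case $c_1=0$ is the constant case already treated in the previous theorem, giving the exponential, i.e. the degenerate GPD). Then $CRM'_\alpha(X;t)=c_1$, and substituting into (\ref{CRM't}) yields
\[
(\alpha-1)c_1=(2-\alpha)\,h(t)\big[(\alpha-1)(c_1 t+c_0)+1\big]-1 .
\]
Solving for $h(t)$ gives
\[
h(t)=\frac{(\alpha-1)c_1+1}{(2-\alpha)\big[(\alpha-1)c_1\,t+(\alpha-1)c_0+1\big]}=\frac{1}{A t+B},
\]
for suitable constants $A,B$ depending on $c_1,c_0,\alpha$, so the hazard rate is the reciprocal of an affine function of $t$. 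Integrating, $\bar{F}(t)=\exp\!\left\{-\int_0^t h(u)\,du\right\}=(1+At/B)^{-1/A}$, which is exactly the survival function of a generalized Pareto distribution; matching $a/b=A/B$ and $1+1/a=1/A$ against the GPD entry of Table~1 identifies the parameters.

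For \emph{sufficiency} I would run the direct computation: if $X$ is GPD with $\bar{F}(x)=(1+ax/b)^{-(1+1/a)}$, then writing $\gamma=(2-\alpha)(1+1/a)$ (finite precisely when $\gamma>1$) and substituting $s=1+ax/b$,
\[
\int_t^\infty\Big(\frac{\bar{F}(x)}{\bar{F}(t)}\Big)^{2-\alpha}\,dx=\frac{1}{\gamma-1}\Big(t+\frac{b}{a}\Big),
\]
so $CRM_\alpha(X;t)=\frac{1}{\alpha-1}\big(\frac{t}{\gamma-1}+\frac{b}{a(\gamma-1)}-1\big)$, which is affine in $t$ and reduces at $t=0$ to the Table~1 value (using $a(\gamma-1)=(2-\alpha)(1+a)-a$).

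The algebra in both directions is routine; the hard part will be the bookkeeping in the necessity half — verifying that the constants $A,B$ extracted from an arbitrary linear DCRMHE actually land in the admissible region of a genuine GPD (positivity of $b$, the condition $a>-1$, the correct support, and convergence of the integral defining $CRM_\alpha$), which in turn forces sign constraints on $c_1$ needed to keep $h(t)\ge 0$ on the support. I would also state explicitly how the properly linear case ($c_1\neq0$) is separated from the constant case already subsumed by the exponential characterization, so that the "linear function of $t$" hypothesis is interpreted unambiguously.
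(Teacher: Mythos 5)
Your proposal is correct and follows essentially the same route as the paper: both directions hinge on the relation (\ref{CRM't}), the necessity half concluding that $h(t)$ is the reciprocal of an affine function (the Hall--Wellner characterization of the GPD), and the sufficiency half being the same direct computation of the integral. The only difference is cosmetic: you solve (\ref{CRM't}) algebraically for $h(t)$, which pins down the constants at once, whereas the paper differentiates once more and integrates $-\frac{d}{dt}\log h(t)=\frac{1}{k+t}$, leaving an integration constant; your version is, if anything, slightly cleaner.
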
\vspace{-0.2in}
    \begin{proof}
       Assume that $CRM_\alpha(X;t)=a+bt$, where $b \neq 0$. Then
\[
CRM'_\alpha(X;t)=b.
\]
Using the relationship given in (\ref{CRM't}), the above equation becomes
\begin{align}\label{gpd}
h(t)\left((\alpha-1)(a+bt)+1\right)
= \frac{(\alpha-1)b+1}{2-\alpha}.
\end{align}
Differentiating (\ref{gpd}) with respect to $t$, we obtain
\[
h(t)(\alpha-1)b
+ \left((\alpha-1)(a+bt)+1\right) h'(t)
= 0.
\]

    The above equation can be written as
      \[
     - \frac{d}{dt}\log h(t)=\frac{1}{k+t};k=\frac{a}{b}+\frac{1}{(\alpha-1)b}.
     \]
     Integrating both sides, we obtain
     \[
-\log h(t)=\log(k+t)+\log c.
     \]
That is
    \begin{equation*}
        h(t)=\frac{1}{ct+d}, \text{ where }  d=kc,
    \end{equation*}which is the hazard rate of GPD.

     Conversely, assume that $X$ follows GPD.
     Then
     \[\bar{F}(x)=\left(1+\frac{ax}{b}\right)^{-\left(1+\frac{1}{a}\right)}.\\
     \]
     Therefore, by direct computation, we obtain
     \[
      CRM_\alpha(X;t)=
     \frac{1}{\alpha-1}\left(\frac{b+at}{(a+1)(2-\alpha)-a}-1\right)
     \]
         \[ =k(b+at)-c,\]
         where \[k=\frac{1}{(\alpha-1)[(a+1)(2-\alpha)-a]} \text{ and }
             c=\frac{1}{\alpha-1}.\]
Thus, $CRM_\alpha(X;t)$ is linear in $t$. Hence, the theorem is proved.
\end{proof}
\noindent Next, we propose a characterization result based on the relation connecting the dynamic cumulative residual Mathai–Haubold entropy (DCRMHE) and the hazard rate $h(t)$.
  \begin{theorem}
    For a non-negative random variable $X$ with survival function $\bar{F}(x)$, hazard rate $h(t)$, and dynamic cumulative residual Mathai--Haubold entropy $CRM_\alpha(X;t)$, the relation
   \begin{align}\label{thm3.4}
   CRM_\alpha(X;t)=\frac{k}{h(t)}-c; c=\frac{1}{\alpha-1},
   \end{align}
   where $k$ is a positive constant, holds if and only if $X$ is a generalized Pareto random variable with survival function \[
   \bar{F}(x)=\left(1+\frac{ax}{b}\right)^{-\left(1+\frac{1}{a}\right)}; a>-1, b>0.
   \]
  \end{theorem}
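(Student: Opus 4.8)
The plan is to build on the first-order relation~(\ref{CRM't}) linking $CRM_\alpha(X;t)$, its derivative $CRM'_\alpha(X;t)$, and the hazard rate $h(t)$, in the same spirit as the proofs of the two preceding characterizations.

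\emph{Necessity.} Suppose~(\ref{thm3.4}) holds, i.e. $CRM_\alpha(X;t)=k/h(t)-c$ with $c=1/(\alpha-1)$. The first step is the algebraic simplification $(\alpha-1)CRM_\alpha(X;t)+1=(\alpha-1)k/h(t)$, in which the terms $-(\alpha-1)c$ and $+1$ cancel. Feeding this into~(\ref{CRM't}) makes the factor $h(t)$ cancel and leaves $CRM'_\alpha(X;t)=(2-\alpha)k-\tfrac{1}{\alpha-1}$, a constant. Next I differentiate~(\ref{thm3.4}) directly, obtaining $CRM'_\alpha(X;t)=-k\,h'(t)/h^2(t)$, and equate the two expressions for $CRM'_\alpha(X;t)$. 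This yields a first-order ODE for $h$ that rearranges to $\frac{d}{dt}\!\left(1/h(t)\right)=\text{const}$; integrating shows $1/h(t)$ is affine in $t$, so $h(t)=1/(ct+d)$ for constants $c,d$. Integrating $h(t)=-\frac{d}{dt}\log\bar F(t)$ then recovers $\bar F$ as a power of $(ct+d)$, and matching with the stated form $(1+ax/b)^{-(1+1/a)}$ determines $a,b$ from $c,d$ (the sub-case $c=0$, i.e. constant hazard rate, corresponding to the exponential law, which is the boundary member of the GPD family). Equivalently, once $CRM_\alpha(X;t)$ has been shown to be affine in $t$, one may simply invoke the preceding characterization (DCRMHE linear in $t$ $\iff$ GPD).

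\emph{Sufficiency.} Suppose $X$ is GPD with $\bar F(x)=(1+ax/b)^{-(1+1/a)}$. This direction is a direct computation: differentiating $-\log\bar F$ gives $h(t)=(a+1)/(b+at)$, hence $1/h(t)=(b+at)/(a+1)$, while the computation already carried out in the proof of the preceding theorem gives $CRM_\alpha(X;t)=\frac{1}{\alpha-1}\left(\frac{b+at}{(a+1)(2-\alpha)-a}-1\right)$. Comparing the two, relation~(\ref{thm3.4}) holds with $k=\frac{a+1}{(\alpha-1)\left[(a+1)(2-\alpha)-a\right]}$ and $c=\frac{1}{\alpha-1}$; the parameter restrictions that make the integral defining $CRM_\alpha$ converge also force $k>0$.

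The routine parts are the two computations in the sufficiency direction and the integration of the separable ODE. The only place that calls for care is the necessity direction: one has to watch the sign of $\alpha-1$, absorb the constant of integration correctly in passing from $h$ to $\bar F$, and treat the degenerate case $CRM'_\alpha(X;t)\equiv 0$ (constant hazard rate, exponential distribution) as a limiting GPD rather than as an exception.
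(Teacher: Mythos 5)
Your proposal is correct and follows essentially the same route as the paper: substitute the assumed relation into (\ref{CRM't}), differentiate $k/h(t)-c$ to get $CRM'_\alpha(X;t)=-k\,h'(t)/h^{2}(t)$, deduce that $\frac{d}{dt}\bigl(1/h(t)\bigr)$ is constant so that $h(t)=1/(d_1t+d_2)$, and prove the converse by the direct GPD computation with $k=\frac{a+1}{(\alpha-1)[(a+1)(2-\alpha)-a]}$. The only cosmetic difference is that you integrate $h(t)=-\frac{d}{dt}\log\bar{F}(t)$ explicitly (and flag the constant-hazard exponential limit), whereas the paper simply invokes Hall and Wellner's hazard-rate characterization of the GPD.
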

  \begin{proof}
      Assume that (\ref{thm3.4}) holds. Then
      \[
      CRM'_\alpha(X;t)=-k(h(t))^{-2}h'(t).
      \]
      Using the relationship given in (\ref{CRM't})
      \begin{align}\label{proof3.4}
    -(\alpha-1)k(h(t))^{-2}h'(t)=(2-\alpha)\left((\alpha-1)k+h(t)(1-(\alpha-1)c)\right)-1.
      \end{align}
     Since $c=\frac{1}{\alpha-1}$, (\ref{proof3.4}) becomes
      \[
      \frac{d}{dt}\left(\frac{1}{h(t)}\right)=2-\alpha-\frac{1}{k(\alpha-1)}
      \]
      \[
        =2-\alpha-\frac{c}{k}.
      \]
      Integrating both sides, we get
      \[
      \frac{1}{h(t)}=\left(\frac{(2-\alpha)k-c}{k}\right)t+d_2.
      \]
    Therefore  \begin{align}\label{gpdh(t)}
     \ h(t)=\frac{1}{d_1t+d_2},
      \end{align}
      where $d_1=\frac{(2-\alpha)k-c}{k}$ and ${d_2}^{-1}=h(0)$.\\
       It should be noted that (\ref{gpdh(t)}) is the characteristic property of GPD as shown by Hall and
       Wellner (1981).\\
      Conversely, assume that $X$ follows GPD.
      From (\ref{CRM t}), we have
\[
 CRM_\alpha(X;t)=\frac{1}{\alpha-1}\left(\frac{b+at}{(a+1)(2-\alpha)-a}-1\right)
 \]
 \[
\qquad\qquad\qquad\qquad\qquad\qquad  =\frac{1}{\alpha-1}\left(\left(\frac{b+at}{a+1}\right)\left(\frac{a+1}{(a+1)(2-\alpha)-a}\right)-1\right)\]

 \[=\frac{k}{h(t)}-c,
            \]
            where
 \[
    k=\frac{a+1}{(\alpha-1)[(a+1)(2-\alpha)-a]} \text{ and } c=\frac{1}{\alpha-1}.\]
This completes the proof of the theorem.
\end{proof}
The following theorem provides a characterization result based on the relationship between the DCRMHE and the mean residual life function for the generalized Pareto distribution (GPD).
\begin{theorem}
    Let $X$ be a non-negative random variable with survival function $\bar{F}(x)$ and mean residual life function $m(t)$. Then the relationship
\begin{align}\label{thm3.5}
     CRM_\alpha(X;t)=km(t)-c, c=\frac{1}{\alpha-1},
\end{align}
holds for every $t>0$, if and only if X follows the GPD.
\end{theorem}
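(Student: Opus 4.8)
The plan is to combine the first–order differential identity (\ref{CRM't}) with the classical relation $m'(t)=h(t)m(t)-1$ for the mean residual life function, and then to invoke the well known characterization of the GPD by a linear mean residual life (equivalently $h(t)=1/(d_1t+d_2)$, cf. Hall and Wellner (1981)).

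For the necessity direction, I would start from the hypothesis $CRM_\alpha(X;t)=k\,m(t)-c$ with $c=\tfrac{1}{\alpha-1}$. The particular value of $c$ is engineered so that $(\alpha-1)CRM_\alpha(X;t)+1=(\alpha-1)k\,m(t)$, which is precisely the bracketed term occurring in (\ref{CRM't}); differentiating the hypothesis also gives $(\alpha-1)CRM'_\alpha(X;t)=(\alpha-1)k\,m'(t)$. Substituting both into (\ref{CRM't}) and eliminating $m'(t)$ through $m'(t)=h(t)m(t)-1$, the expression reduces, after cancelling the factor $\alpha-1$ (legitimate since $\alpha\neq 1$), to $(\alpha-1)^2 k\,h(t)m(t)=(\alpha-1)k-1$. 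Hence $h(t)m(t)$ is constant on $(0,\infty)$; since $m'(t)=h(t)m(t)-1$, this forces $m(t)$ to be an affine function of $t$, and a linear mean residual life is the characteristic property of the GPD, which after matching constants gives the stated $\bar F$ with $a>-1$, $b>0$.

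For the sufficiency direction, assume $\bar F(x)=\bigl(1+\tfrac{ax}{b}\bigr)^{-(1+1/a)}$. A direct integration yields $\int_t^\infty\bar F(x)\,dx=b\bigl(1+\tfrac{at}{b}\bigr)^{-1/a}$, so $m(t)=b+at$, while from the previous theorem $CRM_\alpha(X;t)=\tfrac{1}{\alpha-1}\bigl(\tfrac{b+at}{(a+1)(2-\alpha)-a}-1\bigr)$. Rewriting the latter as $k(b+at)-c$ with $k=\tfrac{1}{(\alpha-1)[(a+1)(2-\alpha)-a]}$ and $c=\tfrac{1}{\alpha-1}$, and using $b+at=m(t)$, gives exactly (\ref{thm3.5}).

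The main point requiring care — rather than a genuine obstacle — is the necessity step: one must check that the algebra really does collapse to ``$h(t)m(t)\equiv\text{const}$'' (with all $\alpha-1$ cancellations valid and the resulting constant positive, so that a bona fide lifetime distribution results), and then correctly translate ``$m(t)$ linear'' into the claimed parametric family, keeping track of the admissible ranges of $a$ and $b$.
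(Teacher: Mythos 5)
Your proof is correct and takes essentially the same route as the paper: differentiate the hypothesis, substitute into (\ref{CRM't}), use the identity linking $h(t)$ and $m(t)$ (the choice $c=\tfrac{1}{\alpha-1}$ killing the extra term), conclude that the mean residual life is affine and invoke the Hall--Wellner linear-MRL characterization of the GPD, then verify the converse by direct computation with $m(t)=b+at$. The only cosmetic difference is that you eliminate $m'(t)$ to obtain $h(t)m(t)$ constant, whereas the paper eliminates $h(t)$ to obtain $m'(t)=c(2-\alpha)-c^{2}/k$ directly; the two steps are equivalent.
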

\begin{proof}
Suppose that (\ref{thm3.5}) holds and differentiating with respect to $t$, we get
        \[
  CRM'_\alpha(X;t)=km'(t).
        \]
      Using (\ref{thm3.5}) in (\ref{CRM't}), we get
\begin{align}\label{proof3.5}
      (\alpha-1)km'(t)=(2-\alpha)h(t)\left((\alpha-1)(km(t)-c)+1\right)-1.
      \end{align}
      Also, we have
      \[
      \frac{1+m'(t)}{m(t)}=h(t).
      \]
   Therefore, (\ref{proof3.5}) becomes
   \[
 (\alpha-1)km'(t)=(2-\alpha)\left((\alpha-1)k(1+m'(t))+\frac{1+m'(t)}{m(t)}(1-c(\alpha-1))\right)-1.
   \]
   Since $c=\frac{1}{\alpha-1}$, the above equation becomes
   \[
   m'(t)=c(2-\alpha)-\frac{c^2}{k}, \text{a constant.}
   \]
    This shows that $m(t)$ is linear in $t$. The linear mean residual life function is a characteristic property of GPD.\\
  Conversely, assume that $X$ follows GPD. By direct calculation, we obtain
  \[
   CRM_\alpha(X;t)=k(b+at)-c
   \]
   \[
\quad   =km(t)-c,
  \]where
  \[
  k=\frac{1}{(\alpha-1)((a+1)(2-\alpha)-a)}, c=\frac{1}{\alpha-1}.
  \]
Thus, the theorem is established.
\end{proof}
\section{New classes of lifetime distributions}
 In this section, two new classes of lifetime distributions in terms of dynamic cumulative residual Mathai-Haubold entropy, $CRM_\alpha(X;t)$ are given.\par

\begin{definition}
A random variable $X$ is said to have an increasing (decreasing) DCRMHE, denoted by  IDCRMHE (DDCRMHE), if $CRM_\alpha(X;t)$ increases (decreases) as $t$ increases for $t\geq 0$.
\end{definition}

\begin{definition}
A random variable $X$ is said to have an increasing (decreasing) failure rate IFR (DFR), if $h(t)$ increases (decreases) as $t$ increases,  $t\geq 0$.
\end{definition}
\begin{flushleft}
 The following theorem provides bound in terms of $h(t)$ for $CRM_\alpha(X;t)$.
\end{flushleft}
\begin{theorem}
    A random variable $X$ with distribution function $F(x)$ possesses an increasing (decreasing) DCRMHE if and only if for all $t>0$
    \[
   CRM_\alpha(X;t)\geq(\leq)\frac{1}{\alpha-1}\left(\frac{1}{(2-\alpha)h(t)}-1\right)
    \]
    or\[
    h(t)\geq(\leq)\frac{1}{(2-\alpha)[(\alpha-1)CRM_\alpha(X;t)+1]}.
    \]
    \end{theorem}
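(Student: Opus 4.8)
The plan is to translate the monotonicity of $CRM_\alpha(X;t)$ into a sign condition on its derivative and then rearrange the identity (\ref{CRM't}) to expose the two displayed inequalities. By the definition of IDCRMHE (DDCRMHE), $X$ has an increasing (decreasing) DCRMHE if and only if $CRM'_\alpha(X;t)\ge 0$ ($\le 0$) for every $t>0$, so the whole argument reduces to an algebraic manipulation of (\ref{CRM't}), and both directions of the ``if and only if'' follow from a single chain of equivalences.

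First I would record the structural fact that keeps the inequality manipulations honest: from the defining integral (\ref{CRM t}),
\[
(\alpha-1)CRM_\alpha(X;t)+1=\int_t^\infty\left(\frac{\bar{F}(x)}{\bar{F}(t)}\right)^{2-\alpha}dx>0,
\]
since $2-\alpha>0$ on the admissible range $0<\alpha<2$ (and the integral is finite under the standing assumption that $CRM_\alpha(X;t)$ exists); likewise $(2-\alpha)h(t)>0$. Consequently, dividing an inequality by $(2-\alpha)h(t)$, or by the bracket $(\alpha-1)CRM_\alpha(X;t)+1$, never reverses its direction.

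Next I would start from (\ref{CRM't}) written as
\[
(\alpha-1)CRM'_\alpha(X;t)=(2-\alpha)h(t)\bigl[(\alpha-1)CRM_\alpha(X;t)+1\bigr]-1,
\]
so that the sign of $CRM'_\alpha(X;t)$ is controlled by comparing $(2-\alpha)h(t)\bigl[(\alpha-1)CRM_\alpha(X;t)+1\bigr]$ with $1$. From this one relation the two forms in the statement drop out: dividing through by the positive quantity $(2-\alpha)\bigl[(\alpha-1)CRM_\alpha(X;t)+1\bigr]$ isolates $h(t)$ and gives the second inequality, while moving the constant $1$ across and dividing by $\alpha-1$ isolates $CRM_\alpha(X;t)$ and gives the first. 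The decreasing case is obtained by reversing every inequality, and reversibility of each step yields the converse directions for free.

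The main obstacle is the bookkeeping of the sign of $\alpha-1$: because the admissible range $0<\alpha<2$ contains both $\alpha<1$ and $\alpha>1$, passing from $CRM'_\alpha(X;t)\ge 0$ to a statement about the right-hand side of (\ref{CRM't}), and later dividing by $\alpha-1$, each carry a case distinction, and one has to check that the inequality sense asserted in the theorem is the one that persists through these flips. Once the positivity facts above are fixed, this sign analysis is the only point that needs care; the derivative identity (\ref{CRM't}) is already available, and the remaining manipulations are routine.
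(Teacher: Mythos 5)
Your overall route is the same as the paper's: the paper's proof is a one-line appeal to the derivative identity (\ref{CRM't}) and the definition of IDCRMHE/DDCRMHE, and your plan fleshes this out, with the useful (and genuinely needed) observation that $(\alpha-1)CRM_\alpha(X;t)+1=\int_t^\infty(\bar{F}(x)/\bar{F}(t))^{2-\alpha}dx>0$ and $(2-\alpha)h(t)>0$, so those divisions are sign-safe.

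The gap is the step you explicitly defer: the bookkeeping of the sign of $\alpha-1$ is not a routine check that merely ``persists through the flips'' --- it is the whole content, and carrying it out shows that your claimed single chain of equivalences delivering \emph{both} displayed inequalities in the stated directions does not exist for all admissible $\alpha$. Concretely, $CRM'_\alpha(X;t)\ge 0$ is equivalent to $(2-\alpha)h(t)\bigl[(\alpha-1)CRM_\alpha(X;t)+1\bigr]\ge 1$ only when $\alpha>1$; for $0<\alpha<1$ multiplying by $\alpha-1<0$ reverses it, giving $(2-\alpha)h(t)\bigl[(\alpha-1)CRM_\alpha(X;t)+1\bigr]\le 1$. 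Tracing both cases to the end, the first displayed inequality $CRM_\alpha(X;t)\ge\frac{1}{\alpha-1}\bigl(\frac{1}{(2-\alpha)h(t)}-1\bigr)$ does characterize increasing DCRMHE for every $\alpha\in(0,1)\cup(1,2)$, because the two sign reversals (at the start and at the final division by $\alpha-1$) cancel; but the second form comes out as $h(t)\le\frac{1}{(2-\alpha)[(\alpha-1)CRM_\alpha(X;t)+1]}$ when $0<\alpha<1$, i.e.\ with the opposite sense to the one asserted. So the two displayed inequalities are algebraically equivalent to each other only for $1<\alpha<2$, and a complete proof must either restrict the $h(t)$ form to that range or state it with the reversed inequality for $0<\alpha<1$; your proposal, by promising both forms from one reversible chain ``for free,'' papers over exactly the point where the argument (and, in fairness, the paper's own statement) needs repair.
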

\begin{proof}
    The proof of the theorem follows directly from (\ref{CRM't}) and Definition 5.1.
\end{proof}
    The next theorem provides the hazard rate ordering that makes use of DCRMHE.
\begin{theorem}
    Consider two non-negative continuous random variables $X$ and $Y$ with survival functions $\bar{F}(t)$ and $\bar{G}(t)$ and hazard rate functions $h_1(t)$ and $h_2(t)$, respectively. Let $X\geq^{hr}Y$, that is, $h_1(t)\leq h_2(t)$ for all $t\geq 0$, then $ CRM_\alpha(X;t)\geq (\leq)  CRM_\alpha(Y;t)$ for all $1<\alpha<2$ $(0<\alpha<1)$.
    \end{theorem}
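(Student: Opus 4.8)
The plan is to reduce the claim to a pointwise inequality between the residual (conditional) survival functions of $X$ and $Y$ and then integrate. First I would recall the standard characterization of the hazard rate order: $X \geq^{hr} Y$, i.e.\ $h_1(t) \leq h_2(t)$ for all $t \geq 0$, is equivalent to
\[
\frac{\bar{F}(x)}{\bar{F}(t)} \;=\; \exp\!\left(-\int_t^x h_1(u)\,du\right) \;\geq\; \exp\!\left(-\int_t^x h_2(u)\,du\right) \;=\; \frac{\bar{G}(x)}{\bar{G}(t)}
\]
for every $x \geq t \geq 0$; that is, for each fixed $t$ the residual survival function $\bar{F}_t(x)$ dominates $\bar{G}_t(x)$ on $(t,\infty)$.

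Next I would use the monotonicity of the map $y \mapsto y^{2-\alpha}$ on $(0,\infty)$. Since $0<\alpha<2$ we have $2-\alpha>0$, so this map is increasing, and applying it to the previous display gives
\[
\left(\frac{\bar{F}(x)}{\bar{F}(t)}\right)^{2-\alpha} \;\geq\; \left(\frac{\bar{G}(x)}{\bar{G}(t)}\right)^{2-\alpha}, \qquad x \geq t .
\]
Integrating over $x\in(t,\infty)$ — both integrals being finite since each integrand is bounded by $1$ and the corresponding DCRMHE is assumed to exist — yields
\[
\int_t^\infty \left(\frac{\bar{F}(x)}{\bar{F}(t)}\right)^{2-\alpha} dx \;\geq\; \int_t^\infty \left(\frac{\bar{G}(x)}{\bar{G}(t)}\right)^{2-\alpha} dx .
\]

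Finally I would subtract $1$ from each side, multiply by $\tfrac{1}{\alpha-1}$, and invoke the definition (\ref{CRM t}), tracking the sign of $\alpha-1$: for $1<\alpha<2$ we have $\alpha-1>0$, so the inequality is preserved and $CRM_\alpha(X;t) \geq CRM_\alpha(Y;t)$; for $0<\alpha<1$ we have $\alpha-1<0$, so the inequality reverses and $CRM_\alpha(X;t) \leq CRM_\alpha(Y;t)$. The only step needing genuine care — and the closest thing to an obstacle here — is the first one, namely the clean passage from the hazard-rate order to the pointwise domination $\bar F_t \ge \bar G_t$; once that is established, the remainder is just monotonicity of a power function together with bookkeeping on the sign of $\alpha-1$.
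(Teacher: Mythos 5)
Your proposal is correct and follows essentially the same route as the paper: hazard-rate order gives the pointwise domination $\bar F(x)/\bar F(t)\ge \bar G(x)/\bar G(t)$ for $x\ge t$, which after raising to the power $2-\alpha$ and integrating yields the inequality, with the direction determined by the sign of $\alpha-1$. Your version is if anything slightly more careful, making explicit the exponential representation behind the first step and the monotonicity of $y\mapsto y^{2-\alpha}$, which the paper leaves implicit.
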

    \begin{proof}
        Assume that $h_1(t)\leq h_2(t)$.
        Then $\bar{F}_{X_t}(t)$$\geq$ $\bar{G}_{X_t}(t)$,
        which implies that
        \[
        \frac{\bar{F}(x)}{\bar{F}(t)} \geq \frac{\bar{G}(x)}{\bar{G}(t)}.
        \]
        Consequently
        \[
       \int_{t}^{\infty} {\left(\frac{\bar{F}(x)}{\bar{F}(t)}\right)}^{(2-\alpha)} dx\geq \int_{t}^{\infty}{\left( \frac{\bar{G}(x)}{\bar{G}(t)}\right)}^{(2-\alpha)}dx,
        \]
        and hence
    \[
        (\alpha-1)CRM_\alpha(X;t)+1\geq(\alpha-1)CRM_\alpha(Y;t)+1.
     \]
        For $1<\alpha<2$, this gives
        \[
        CRM_\alpha(X;t) \geq CRM_\alpha(Y;t),\]
      while for $0<\alpha<1$, we get
      \[CRM_\alpha(X;t) \leq CRM_\alpha(Y;t).\]
     Hence, the theorem.
    \end{proof}
\begin{theorem}
The uniform distribution on $(a,b),a<b$ can be distinguished by DDCRMHE for $1<\alpha<2$ and IDCRMHE for $0<\alpha<1$.
\end{theorem}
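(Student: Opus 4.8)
The plan is to evaluate $CRM_\alpha(X;t)$ in closed form for $X\sim U(a,b)$ and then read off the sign of its $t$-derivative. First I would note that for $a\le t\le b$ the survival function is $\bar{F}(x)=(b-x)/(b-a)$, so $\bar{F}_t(x)=\bar{F}(x)/\bar{F}(t)=(b-x)/(b-t)$ on $[t,b]$; the substitution $u=b-x$ then gives
\[
\int_t^\infty\left(\frac{\bar{F}(x)}{\bar{F}(t)}\right)^{2-\alpha}dx=\frac{1}{(b-t)^{2-\alpha}}\int_0^{b-t}u^{2-\alpha}\,du=\frac{b-t}{3-\alpha},
\]
and hence, by (\ref{CRM t}),
\[
CRM_\alpha(X;t)=\frac{1}{\alpha-1}\left(\frac{b-t}{3-\alpha}-1\right),\qquad a\le t\le b,
\]
while for $0\le t\le a$ one has $\bar{F}(t)=1$ and $CRM_\alpha(X;t)$ is the constant $\tfrac{1}{\alpha-1}\bigl(\tfrac{b-a}{3-\alpha}-1\bigr)$.

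Next I would differentiate on $(a,b)$: $CRM'_\alpha(X;t)=-\dfrac{1}{(\alpha-1)(3-\alpha)}$. Since $0<\alpha<2$ gives $3-\alpha>0$, the sign of $CRM'_\alpha(X;t)$ is exactly the sign of $-1/(\alpha-1)$, which is negative for $1<\alpha<2$ and positive for $0<\alpha<1$. Therefore $CRM_\alpha(X;t)$ is strictly decreasing in $t$ on the support when $1<\alpha<2$ and strictly increasing when $0<\alpha<1$; since it is constant for $t\le a$, it is monotone in the stated direction for all $t\ge0$. By Definition 5.1 this is precisely the assertion that $U(a,b)$ is DDCRMHE for $1<\alpha<2$ and IDCRMHE for $0<\alpha<1$.

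An alternative is to avoid the explicit integral by combining (\ref{CRM't}) with the fact that the hazard rate of $U(a,b)$ is $h(t)=1/(b-t)$, which is increasing, and then appeal to Theorem 5.1; but the direct evaluation above is shorter and has the additional merit of pinning down the value of the measure off the support. I do not anticipate a genuine obstacle here: the only points requiring a little care are the behaviour of $t$ outside $(a,b)$, where the measure is constant and hence consistent with both monotonicity claims, and keeping track of the sign of $\alpha-1$, which is exactly what produces the dichotomy between the two parameter ranges.
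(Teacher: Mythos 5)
Your argument is correct and is essentially the paper's own proof: compute $CRM_\alpha(X;t)=\frac{1}{\alpha-1}\bigl(\frac{b-t}{3-\alpha}-1\bigr)$ for the uniform law, differentiate to get $CRM'_\alpha(X;t)=\frac{1}{(\alpha-1)(\alpha-3)}$, and read off the sign according to whether $\alpha-1$ is positive or negative. One small inaccuracy in your side remark: for $0\le t\le a$ the measure is not constant, since $\int_t^\infty \bar{F}(x)^{2-\alpha}dx=(a-t)+\frac{b-a}{3-\alpha}$, so $CRM_\alpha(X;t)$ is linear in $t$ with slope $-\frac{1}{\alpha-1}$ there; this slope has the same sign as on $(a,b)$, so the stated monotonicity (and hence the theorem) is unaffected.
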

 \begin{proof}
        Let $X$ follows uniform distribution with parameters $a$ and $b$.\\  Then
        \[
        \bar{F}(x)=\frac{b-x}{b-a};a<x<b.
        \]
      \[
   CRM_\alpha(X;t)= \frac{1}{\alpha-1}\left(-\frac{(b-t)}{\alpha-3}-1\right).
      \]
      Differentiating with respect to $t$, we get
       \[
       CRM'_\alpha(X;t)=\frac{1}{(\alpha-1)(\alpha-3)}.
       \]
       It is clear that, when $0<\alpha<1$ \[CRM'_\alpha(X;t)>0\]
       and when $ 1<\alpha<2$ \[CRM'_\alpha(X;t)<0.
       \]
       Hence the theorem.
        \end{proof}
 \noindent The following theorem shows that the only distribution that is both IDCRMHE and DDCRMHE is exponential.
       \begin{theorem}
        Consider a random variable $X$ that holds both IDCRMHE and DDCRMHE, then $X$ follows an exponential distribution.
        \end{theorem}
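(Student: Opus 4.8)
The plan is to exploit the obvious fact that a real function of $t$ which is simultaneously non-decreasing and non-increasing on $[0,\infty)$ must be constant, and then to reduce the statement to the exponential characterization already proved (the theorem in Section 4 asserting that $CRM_\alpha(X;t)$ is independent of $t$ if and only if $X$ is exponential).

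First I would unwind the hypothesis: by Definition 5.1, $X$ being both IDCRMHE and DDCRMHE means that $t\mapsto CRM_\alpha(X;t)$ is at once non-decreasing and non-increasing, hence constant; write $CRM_\alpha(X;t)=k$ for all $t\ge 0$. This immediately reduces the claim to: a constant DCRMHE forces the exponential law. At this point one can simply cite the characterization theorem of Section 4 and conclude. For completeness I would also give the direct argument from the differential identity \eqref{CRM't}: constancy gives $CRM'_\alpha(X;t)=0$, so \eqref{CRM't} collapses to
\[
(2-\alpha)\,h(t)\big[(\alpha-1)k+1\big]=1\qquad\text{for all }t>0 .
\]
Since $\alpha\neq 1$ and $\alpha\neq 2$ and the bracketed factor is a fixed number, $h(t)$ is forced to be a positive constant, and a constant hazard rate characterizes the exponential distribution.

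I do not expect a genuine obstacle here. The one point deserving a sentence of care is that the factor $(\alpha-1)k+1$ in \eqref{CRM't} is nonzero, so that one may legitimately solve for $h(t)$; this holds because $(\alpha-1)CRM_\alpha(X;t)+1=\int_t^\infty\big(\bar F(x)/\bar F(t)\big)^{2-\alpha}\,dx>0$ for any bona fide survival function. With that remark the conclusion follows, and one may optionally note that the converse (an exponential variable is trivially both IDCRMHE and DDCRMHE, being constant in $t$) recovers the intuition that the exponential is the unique boundary case between the two classes.
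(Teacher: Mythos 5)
Your proposal is correct and follows essentially the same route as the paper: both monotonicity hypotheses force $CRM_\alpha(X;t)$ to be constant in $t$, and then the characterization theorem of Section 4 (constancy of DCRMHE if and only if $X$ is exponential) gives the conclusion. Your extra remarks spelling out the differential identity and the positivity of $(\alpha-1)k+1$ only make explicit steps the paper delegates to Theorem 4.1.
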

        \begin{proof}
            If $X$ is DDCRMHE (IDCRMHE), then  \[
             CRM'_\alpha(X;t)\leq0(\geq0).
            \]
            From the above  inequalities, we get \[
            CRM_\alpha(X;t)=k, \\\ a \\\ constant.
            \]

            By Theorem 4.1, the constancy of $CRM_\alpha(X;t)$ means that $X$ follows an exponential distribution. This completes the proof.
        \end{proof}

       \begin{corollary}
              If the random variable $X$ has DDCRMHE (IDCRMHE), then
           \[
           \bar{F}(t)\geq(\leq)exp\left(-\int_{0}^{t}{\frac{1}{(2-\alpha)[(\alpha-1)CRM_\alpha(X;t)+1]}}dx\right).
           \]
  \end{corollary}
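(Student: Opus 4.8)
The plan is to obtain the bound as an immediate consequence of the hazard-rate characterisation of the DDCRMHE/IDCRMHE classes from Theorem 5.1, combined with the classical representation $\bar F(t)=\exp\left(-\int_0^t h(x)\,dx\right)$ of the survival function through the cumulative hazard. First I would recall that, by Theorem 5.1, if $X$ has DDCRMHE then for every $t>0$
\[
h(t)\le\frac{1}{(2-\alpha)\left[(\alpha-1)CRM_\alpha(X;t)+1\right]},
\]
and the reverse inequality holds if $X$ has IDCRMHE. Note that the denominator is strictly positive, since $(\alpha-1)CRM_\alpha(X;t)+1=\int_t^\infty\left(\bar F(x)/\bar F(t)\right)^{2-\alpha}\,dx>0$, so the rearrangement of (\ref{CRM't}) underlying Theorem 5.1 is legitimate.

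Next I would fix $t>0$ and integrate the pointwise inequality over $[0,t]$. In the DDCRMHE case this gives
\[
\int_0^t h(x)\,dx\le\int_0^t\frac{1}{(2-\alpha)\left[(\alpha-1)CRM_\alpha(X;x)+1\right]}\,dx,
\]
hence, multiplying by $-1$ and applying the increasing function $u\mapsto e^{-u}$,
\[
\exp\left(-\int_0^t h(x)\,dx\right)\ge\exp\left(-\int_0^t\frac{1}{(2-\alpha)\left[(\alpha-1)CRM_\alpha(X;x)+1\right]}\,dx\right).
\]
Replacing the left-hand side by $\bar F(t)$ via the cumulative-hazard identity then yields the asserted lower bound; the IDCRMHE case is handled identically with every inequality reversed, producing the upper bound.

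I do not anticipate a genuine obstacle here: the corollary is essentially a one-line deduction from Theorem 5.1. The only point that needs care is the bookkeeping of signs — tracking how each inequality flips when dividing by $(2-\alpha)\left[(\alpha-1)CRM_\alpha(X;t)+1\right]$, when integrating over $[0,t]$, and when passing through the decreasing map $x\mapsto e^{-x}$ — keeping in mind that for $0<\alpha<1$ the two monotonicity classes in Theorem 5.1 exchange roles, so the statement should be read with that sign convention in force.
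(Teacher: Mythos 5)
Your proposal is correct and follows essentially the same route as the paper: take the hazard-rate bound for the DDCRMHE (IDCRMHE) class from Theorem 5.1, combine it with $\bar F(t)=\exp\left(-\int_0^t h(u)\,du\right)$, and integrate/exponentiate to get the stated bound. Your extra remarks on the positivity of $(\alpha-1)CRM_\alpha(X;t)+1$ and on sign bookkeeping are fine but not needed beyond what the paper's one-line argument already uses.
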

       \begin{proof}
           If $X$ possess DDCRMHE (IDCRMHE), then
           \[
           h(t)\leq (\geq)\frac{1}{(2-\alpha)[(\alpha-1)CRM_\alpha(X;t)+1]}.
           \]
           But
           \[
           \bar{F}(t)=exp\left(-\int_{0}^{t}h(u)du\right).
           \]
           Therefore
           \[
           \bar{F}(t)\geq (\leq) exp\left(-\int_{0}^{t}{\frac{1}{(2-\alpha)[(\alpha-1)CRM_\alpha(X;t)+1]}}dx\right).
           \]
       \end{proof}
\section{Non-parametric Kernel estimation}
Let $X_1,X_2,...,X_n$ be a random sample taken from a population with distribution function
$F$. Based on kernel density estimation, we construct non-parametric estimators for the proposed measures and let us assume that the kernel function $k(x)$ satisfies the following conditions:
\begin{itemize}
    \item[1.] $k(x) \geq 0$, for all $x$
    \item[2.] $\int {k(x)} dx =1$
    \item[3.] $k(.)$ is symmetric.
\end{itemize}
 The probability density function $f(x)$ at a point $x$ can be estimated using the kernel density estimator given by (Parzen (1962))
\begin{equation}\label{f x}
    f_n(x)=\frac{1}{nh}\sum_{j=1}^{n}k\left(\frac{x-X_j}{h}\right), \text{$h$ is the bandwidth.}
\end{equation}

Since our measure is defined based on survival functions, we choose a kernel based estimator for the survival function, which is given by
\begin{equation}\label{F x}
    \bar{F}(x)=\frac{1}{n}\sum_{j=1}^{n}\bar{K}\left(\frac{x-X_j}{h}\right),
\end{equation}
where $\bar{K}$ be the survival function of the kernel $k$ and $\bar{K}(t)=\int_{t}^{\infty}k(u)du$. \par

\sloppy
Non-parametrically, the CRMHE, $CRM_{\alpha}(X)$ can be estimated using a kernel-based approach, defined as follows
\begin{align}\label{CRMX hat}
    \widehat{CRM}_{\alpha}(X)=\frac{1}{\alpha-1}\left(\int_{0}^{\infty}\left(\frac{1}{n}\sum_{j=1}^{n}\bar{K}\left(\frac{x-X_j}{h}\right)\right)^{2-\alpha}dx-1\right).
\end{align}
Similarly, the estimator of DCRMHE, $CRM_{\alpha}(X;t)$ is as follows
\begin{align}\label{CRMt hat}
    \widehat{CRM}_{\alpha}(X;t)=\frac{1}{\alpha-1}\left(\int_{t}^{\infty}\left(\frac{\sum_{j=1}^{n}\bar{K}\left(\frac{x-X_j}{h}\right)}{\sum_{j=1}^{n}\bar{K}\left(\frac{t-X_j}{h}\right)}\right)^{2-\alpha}dx-1 \right).
\end{align}
Next, we examine the consistency of the proposed estimators. The kernel based estimator of the cumulative distribution function $F(x)$, established by Berg and Politis (2009), is consistent and is defined as
\begin{equation}\label{F hat}
    \widehat{F}_h(x)=\int_{-\infty}^{t}\hat{f}(t)dx=\frac{1}{n}\sum_{j=1}^{n}\tilde{K}\left(\frac{t-X_j}{h}\right),
\end{equation}

where $\tilde{K}(t)=\int_{0}^{t}k(u)du$.\par

 They have also given the expression for the variance of $\hat{F}_h(t)$ to establish consistency, which is given by
\begin{align}\label{var}
    Var(\widehat{F}_h(t))=\frac{F(t)(1-F(t))}{n}-\frac{2f(t)}{n}\left(\int u\tilde{K}(u)k(u)du\right)h+O\left(\frac{h}{n}\right).
\end{align}
Under certain assumptions, if $h\to 0$ and $nh \to \infty$ as $n \to \infty$, then $Var(\widehat{F}_h(t)) \to 0$ and hence the consistency of $\widehat{F}_h(t)$ is established. In order for the consistency of our proposed estimator to be proved, we need the following assumptions.\\
Let $\phi(t)$ denote the characteristic function of $X$:
\begin{itemize}
    \item[(A)] There is a $p >0$ such that $\int_{-\infty}^{\infty}|t|^{p}|\phi(t)|< \infty$.
    \item[(B)] There are positive constants $d$ and $D$ such that $|\phi(t)| \leq De^{-d|t|}$.
    \item[(C)] There is a positive constant $b$ such that $\phi(t)=0$ for all $|t| \geq b$.
\end{itemize}
    The consistency of the estimators has to be proved next. For this, first we prove the consistency of $\widehat{\bar{F}}(t)$. By direct calculation, it can be shown that
    \begin{equation*}
        \tilde{K}(t)=1-\int_t^\infty k(u)du=1-\bar{K}(t).
    \end{equation*}
    \begin{equation*}
    \text{So,} \\\ \widehat{F}(t)=\frac{1}{n}\sum_{j=1}^{n}\left(1-\bar{K}\left(\frac{t-X_j}{h}\right)\right)=1-\hat{\bar{F}}(t).
    \end{equation*}
    Therefore, (\ref{var}) becomes
    \begin{align*}
     Var(\widehat{\bar{F}}_h(t))=\frac{\bar{F}(t)(1-\bar{F}(t))}{n}-\frac{2f(t)}{n}\left(\int u(1-\bar{K}(u))k(u)du\right)h+O\left(\frac{h}{n}\right).
    \end{align*}
    \text{That is}
\begin{equation}\label{var final}Var(\widehat{\bar{F}}_h(t))=\frac{\bar{F}(t)(1-\bar{F}(t))}{n}+\frac{2f(t)}{n}\left(\int u\bar{K}(u)k(u)du\right)h+O\left(\frac{h}{n}\right).
    \end{equation}
    It is clear from (\ref{var final}) that $\widehat{\bar{F}}_h(t)$ is a consistent estimator of $\bar{F}(t)$. Hence from (\ref{CRMX hat}), we can see that $\widehat{CRM}_{\alpha}(X)$ is a consistent estimator of ${CRM}_{\alpha}(X)$ and from (\ref{CRMt hat}), $\widehat{CRM}_{\alpha}(X;t)$ is a consistent estimator of ${CRM}_{\alpha}(X;t)$.
\section{Simulation studies}
This section gives Monte Carlo simulation studies based on the estimators
$ \widehat{CRM}_{\alpha}(X)$ and $ \widehat{CRM}_{\alpha}(X;t)$. The aim of Monte Carlo simulation study is to evaluate the finite-sample performance of the proposed estimator by analyzing its bias and mean squared error under different sample sizes and distributions. Here, we use R software to perform the simulation and the experiment is repeated 10,000 times on various sample sizes $n=30, 50, 70, 90$. We consider two different lifetime random variables, Weibull and Uniform, from which the random variable $X$ is generated. The parameters are randomly selected and various sample sizes are used for different values of $\alpha$. To estimate the proposed measure, we use the kernel survival estimator. In order to select the bandwidth, Silverman's thumb rule is used, where the bandwidth is given by
$
h=1.06 \widehat\sigma {n}^ {-1/5},$ and $\widehat\sigma$ be the standard deviation of $n$ samples taken into account. With reference to equations (\ref{CRMX hat}) and (\ref{CRMt hat}), we calculate the estimates of CRMHE and DCRMHE, and in addition, the bias and MSE are also obtained.\par
Table \ref{CRMHE alpha=1.5} and Table \ref{CRMHE alpha=0.5} present the bias and MSE of the CRMHE estimator for the Uniform and Weibull distributions across various sample sizes for $\alpha=1.5$ and $\alpha=0.5$, respectively. From Table \ref{CRMHE alpha=1.5}, it is evident that the MSE values are lower for the Uniform distribution compared to the Weibull distribution. Similarly, from Table 3, it can be observed that the CRMHE estimator performs better for Uniform samples than for Weibull samples when $\alpha=0.5$.
For both values of $\alpha$, the bias and MSE decrease as the sample size $n$ increases. This show that the CRMHE estimator becomes more accurate and reliable with larger samples, irrespective of the underlying distribution.

\begin{table}[h!]
    \centering
    \caption{Bias and MSE of the CRMHE estimator for various distributions when $\alpha=1.5$}.
    \label{CRMHE alpha=1.5}
    \begin{tabular}{|c|cc|cc|}
\toprule
$n$ & \multicolumn{2}{c|}{$X \sim \text{Weibull}(5,1)$ } &  \multicolumn{2}{c|}{$X \sim \text{Uniform}(1.25,1.75)$}  \\

\cmidrule{1-5}

& Bias & MSE & Bias & MSE \\

\midrule
30 &0.0334& 0.0074& 0.0434& 0.0043\\

50& 0.0292& 0.0046 &0.0386& 0.0029\\

70& 0.0265& 0.0033&0.0349& 0.0022\\
90 &0.0240& 0.0026&0.0334& 0.0019 \\
\bottomrule
\end{tabular}
\end{table}

\begin{table}[h!]
\centering
\caption{Bias and MSE of the CRMHE estimator for various distributions when $\alpha=0.5$}.
    \label{CRMHE alpha=0.5}

\begin{tabular}{|c|cc|cc|}
\toprule
$n$ & \multicolumn{2}{c|}{$X \sim \text{Weibull}(5,1)$ }  &  \multicolumn{2}{c|}{$X \sim \text{Uniform}(0.5,1)$}  \\
\cmidrule{1-5}

& Bias & MSE & Bias & MSE \\
\midrule
 30& 0.0171& 0.0073& 0.0121& 0.0031\\

 50& 0.0146& 0.0043&0.0099& 0.0019\\
70& 0.0135& 0.0031& 0.0094& 0.0013 \\
 90& 0.0126& 0.0025&0.0080& 0.0011 \\
 \bottomrule
\end{tabular}
\end{table}

Next, we examine the performance of the DCRMHE estimator for different values of $t$ and $n$. Table \ref{DCRMHE alpha=1.5} gives the bias and MSE of the DCRMHE estimator when $X$ follows Weibull(5,3)  with $\alpha=1.5$. Similarly, Table \ref{DCRMHE alpha=0.5} presents the bias and MSE of the DCRMHE estimator when $X$ follows Weibull(5,1) with $\alpha=0.5$. It is evident from Tables 4 and 5 that the DCRMHE estimator shows lower bias and MSE for Weibull samples. From Table \ref{DCRMHE alpha=1.5} and Table \ref{DCRMHE alpha=0.5}, it is clear that the bias and MSE decrease as the sample size $n$ increases.
\begin{table}[h!]
\centering
\caption{Bias and MSE of the DCRMHE estimator across different $t$ and $n$ values for ${X \sim \text{Weibull}(5,3)}$ with $\alpha=1.5$}.
    \label{DCRMHE alpha=1.5}
\begin{tabular}{|c|c|cc|}
\toprule
$t$ & $n$ & \multicolumn{2}{c|}{$X \sim \text{Weibull}(5,3)$ } \\
\midrule
&  &Bias & MSE \\
\midrule
     &30 &0.1088 &0.0676\\
0.50      &50 &0.0938 &0.0417 \\
 &70 &0.0852 &0.0310 \\
     &90 &0.0790&0.0244 \\

     \midrule
     &30 &0.1134 &0.0685 \\
 0.75    &50 &0.0974 &0.0424 \\
 &70 &0.0882 &0.0315\\
     &90 &0.0817 &0.0248 \\

     \midrule
     &30 &0.1206 &0.0701 \\
     &50 &0.1031 &0.0435 \\
1.00 &70 &0.0932&0.0323 \\
     &90 &0.0860&0.0254\\

     \bottomrule
\end{tabular}
\end{table}

\begin{table}[h!]
\centering
\caption{Bias and MSE of the DCRMHE estimator across different $t$ and $n$ values for ${X \sim \text{Weibull}(5,1)}$ with $\alpha=0.5$}.
    \label{DCRMHE alpha=0.5}
\begin{tabular}{|c|c|cc|}
\toprule
$t$ & $n$ & \multicolumn{2}{c|}{$X \sim \text{Weibull}(5,1)$ }  \\
\midrule
&  &Bias & MSE \\
\midrule
       &30 &-0.0061 &0.0042   \\
0.50       &50 &-0.0041 &0.0026   \\
 &70 & -0.0034 &0.0019  \\
       &90 &-0.0029 & 0.0015  \\

       \midrule
       &30 &-0.0251 & 0.0030  \\
       0.75&50 &-0.0205 &0.0019   \\
        &70 &-0.0181&0.0014 \\
       &90 &-0.0164&0.0011   \\

       \midrule
       &30 &-0.0345&0.0027   \\
  1.00     &50 &-0.0291&0.0018   \\
 &70 &-0.0259 &0.0013   \\
       &90 &-0.0237 &0.0011   \\

        \bottomrule

\end{tabular}
\end{table}

\section{Data analysis}

In this section, two real-life datasets are used to analyze the proposed estimator given in (\ref{CRMt hat}). Specifically, first we consider the data used by Helu et al. (2020), which consists of  70 failure times of aircraft windshields measured in units of 1000 hours.
  Here, we estimated  the DCRMHE for the dataset with $\alpha=1.5$ and  assessed its efficiency by comparing the estimated and theoretical values of the measure for different values of $t$. 
   Table \ref{DCRMHEalpha=1.5} provides the bias and mean squared error (MSE) of DCRMHE using the kernel estimator defined in equation (\ref{CRMt hat}), based on 10,000 bootstrap samples of size 70 for $\alpha=1.5$. From Table \ref{DCRMHEalpha=1.5}, we can see that the proposed estimator and its theoretical values are closely matched, indicating that the accuracy of the estimation improves. This shows that for higher values of $t$, the level of uncertainty related to the failure time is reduced. In addition, we can see that the bias is moderately small compared to the true value. From fig.\ref{fig:graph 1}, we can see the plot of $CRM_{\alpha}(X;t)$ and $\widehat{CRM}_{\alpha}(X;t)$ against $t$ showing their decreasing trend for $\alpha=1.5$.

 \begin{table}[h!]
\centering
\caption{Bias and MSE of estimator for DCRMHE for different values of $t$ when $\alpha=1.5$}.
\label{DCRMHEalpha=1.5}
\begin{tabular}{c| c c c c}
\hline
$t$& $CRM_{\alpha}(X;t)$ & $\widehat{CRM}_{\alpha}(X;t)$ & Bias & MSE \\
\hline
0.9 & 1.9699 & 1.8861&-0.0913 &0.0368\\
1.0 & 1.7976&1.7156 &-0.0897& 0.0364\\

1.1 &  1.6349& 1.5505&-0.0922& 0.0366\\

1.2 & 1.4823&1.3914 &-0.0990 &0.0377\\

1.3 & 1.3403&1.2385 &-0.1101& 0.0397\\
\hline
\end{tabular}
\end{table}

\begin{figure}
    \centering
    \includegraphics[width=0.7\linewidth]{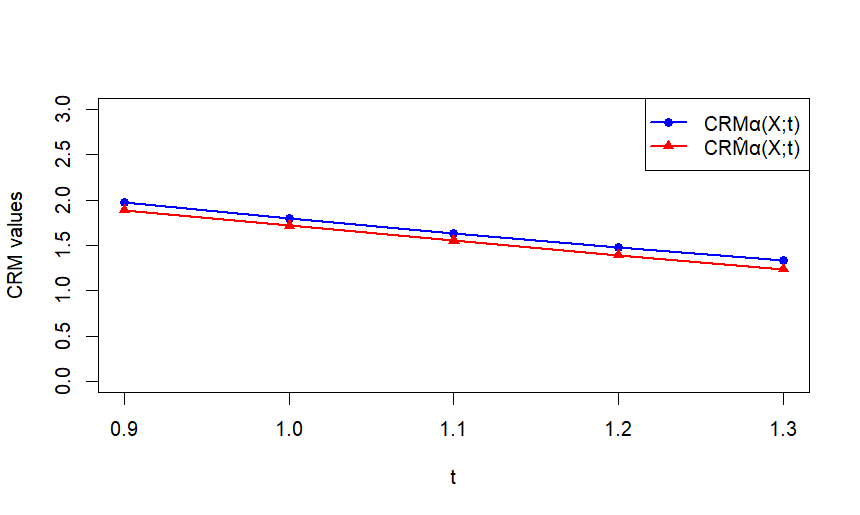}
    \caption{Comparison of $CRM_{\alpha}(X;t)$ and $\widehat{CRM}_{\alpha}(X;t)$ values when $\alpha=1.5$ for different $t$}.
    \label{fig:graph 1}
\end{figure}
Next, we consider the data used by Nair (1984) on the failure times (measured in millions of operations) of 40 randomly selected mechanical switches. Here we fit a Weibull model to the data and the fit is checked using Kolmogrov-Smirnov (KS) test. We fit the Weibulll distribution on the shape parameter $k= 3.85819$ and scale parameter $\lambda=2.3409$. The DCRMHE was estimated for the dataset when $\alpha=1.5$.  Table \ref{data 2} provides the bias and mean squared error (MSE) of DCRMHE using the kernel estimator defined in equation (\ref{CRMt hat}), based on 10,000 bootstrap samples of size 70. From Table \ref{data 2}, we can see that the proposed estimator and its theoretical values are closely matched, indicating that the accuracy of the estimation improves. From fig.\ref{fig:graph 2}, we can see the plot of $CRM_{\alpha}(X;t)$ and $\widehat{CRM}_{\alpha}(X;t)$ against $t$ showing their decreasing trend for $\alpha=1.5$.

\begin{table}[h!]
\centering
\caption{Bias and MSE of estimator for DCRMHE for different values of $t$ when $\alpha=1.5$}.
\label{data 2}
\begin{tabular}{c| c c c c}\hline
$t$& $CRM_{\alpha}(X;t)$ & $\widehat{CRM}_{\alpha}(X;t)$ & Bias & MSE \\
\hline
0.9 &  1.3142 & 1.4519&0.0907 &0.0758\\
1.0 & 1.1344&1.2741 &0.0921& 0.0767\\

1.1 &  0.9598& 1.1047&0.0965& 0.0781\\

1.2 & 0.7909&0.9442 &0.1041 &0.0799\\

1.3 & 0.6281&0.7926 &0.1142& 0.0825\\
\hline
\end{tabular}
\end{table}

\begin{figure}
    \centering
    \includegraphics[width=0.7\linewidth]{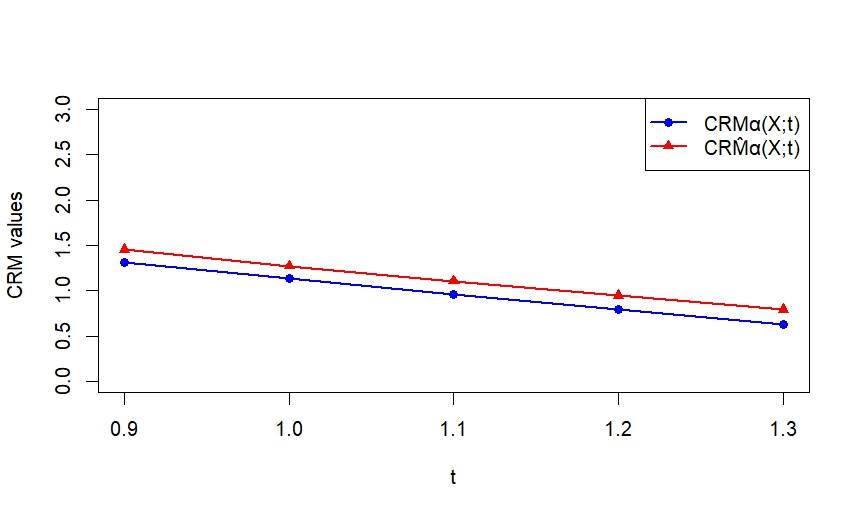}
    \caption{Comparison of $CRM_{\alpha}(X;t)$ and $\widehat{CRM}_{\alpha}(X;t)$ values when $\alpha=1.5$ for different $t$}
    \label{fig:graph 2}
\end{figure}

\section{Conclusion}
In this paper, we study the properties of the cumulative residual Mathai--Haubold entropy (CRMHE) and then propose a dynamic extension, DCRMHE, showing that it uniquely determines the distribution. Next, we examine the relationship linking the hazard rate and the mean residual life function with DCRMHE and characterize several lifetime distributions. We also investigate the non-parametric estimators of CRMHE and DCRMHE based on the kernel density estimation of the survival function, and their performance is assessed through a Monte Carlo simulation study. Finally, we used data on the failure times of 70 aircraft windshields and failure times of 40 randomly selected mechanical switches to demonstrate the relevance of the proposed DCRMHE estimator.

\end{document}